\newtheorem{thm}{Theorem}{\bfseries}{\itshape}
\newcommand{\thmlabel}[1]{\label{thm:#1}}
\newcommand{\thmref}[1]{Theorem~\ref{thm:#1}}
\newcommand{\lemlabel}[1]{\label{lem:#1}}
\newcommand{\lemref}[1]{Lemma~\ref{lem:#1}}
\title{A note on choosability with defect 1 of graphs on
  surfaces}
\author{Vida Dujmovi\'c$^*$ \and Djedjiga Outioua\thanks{School of
    Electrical Engineering and Computer Science,
        University of Ottawa, emails: vida.dujmovic@uottawa.ca, douti102@uottawa.ca. Research supported by NSERC and
      Ministry of Research, Innovation and Science of Ontario. 
       The first author would like the thank
                    Fr\'ed\'eric Havet for introducing her to
                    defective list colourings more than a decade ago,
               during her visit to INRIA Sophia Antipolis,
                 funded by French Consulat G\'en\'eral de France
                    \'a Montr\'eal.
}}
\theoremstyle{plain}
\newtheorem{lemma}{Lemma}
\theoremstyle{definition}
\newcommand{\ceil}[1]{\ensuremath{\protect\lceil#1\rceil}}
\newcommand{\bceil}[1]{\ensuremath{\protect\Big\lceil#1\ \Big\rceil}}
\newcommand{\floor}[1]{\ensuremath{\protect\lfloor#1\rfloor}}
\newcommand{\EndProof}[1]{
\begin{minipage}[b]{\textwidth}
#1
\end{minipage}\hfill\qed
\renewcommand{\qed}{}
}
\newcommand{\eg}[1]{\ensuremath{\protect\textup{\textsf{eg}}(#1)}}
\newcommand{\ch}[1]{\ensuremath{\protect{(#1)^*}-}}
\newcommand{\cl}[1]{\ensuremath{\protect{(#1)^*-}}}
\renewcommand{\deg}[2][]{\ensuremath{\protect{\textup{\textsf{deg}}_{#1}(#2)}}}
\newcommand{\w}[1]{\ensuremath{\protect{\textup{\textsf{w}}(#1)}}}
\newcommand{\Ne}{\ensuremath{\protect{\mathbb{N}}}}
\newcommand{\sm}{\ensuremath{\setminus}}
\newcommand{\se}{\ensuremath{\subseteq}}
\date{~}
\begin{document}
\maketitle

\begin{abstract}
This note proves that every graph of Euler genus $\mu$ is $\ceil{2 +  \sqrt{3\mu + 3}}$--choosable with defect 1 (that is, clustering
2). Thus, allowing defect as small as 1 reduces the choice number of
surface embeddable graphs below the chromatic number of the surface. For example,
the chromatic number of the family of toroidal graphs is known to be
$7$. The bound above implies that toroidal graphs are $5$-choosable
with defect 1. This strengthens the result of  Cowen, Goddard and
Jesurum (1997)  who showed that toroidal graphs are $5$-colourable
with defect 1.
\end{abstract}

\section{Introduction}

In a vertex coloured graph, a \emph{monochromatic component} is a
connected component of $G$ where each vertex has the same
colour. A vertex colouring is \emph{proper} if every monochromatic
component has $1$ vertex. The following notion, introduced formally by
Cowen, Cowen and Woodall \cite{DBLP:journals/jgt/CowenCW86} and
studied as early as 1966 by Lov\'asz~\cite{l66}, generalizes proper graph colourings. A
graph $G$ is $k$-colourable with \emph{defect} $d$, that is, \emph{\ch{k,d}colourable}, if the vertices of $G$ can be coloured with $k$ colours such that the subgraph
induced by the vertices of each monochromatic component has maximum
degree $d$. Colourings with defect zero are proper, otherwise they are
called  \emph{defective}. In this note we are
interested in defective colourings that
are as close as possible to being proper, that is, 
in colourings with defect at most 1. In particular, we study such
colourings for graphs on surfaces.  Typically, the goal of this line of
research is to reduce the number of colours below the number required
by proper colourings.  For example, Voigt proved~\cite{DBLP:journals/dm/Voigt93} that there are planar graphs that are
not 4-choosable, but Cushing and
Kierstead \cite{DBLP:journals/ejc/CushingK10} proved that planar
graphs are  4-choosable if defect 1 is allowed,  thereby answering an 
open problem posed by several authors \cite{DBLP:journals/jgt/Woodall04,aop, Skre,Eaton}. Defective
choosability will be formally defined below. 
For some classes of graphs, however, allowing even arbitrarily big
defect does not reduce the number of colours below the chromatic
number of the class. For example, $k$-trees have chromatic number $k+1$,
and yet for every $d$ there is $k$-tree that is not \ch{k,d}colourable (see the standard
example in \cite{wood-survey-2018}). %This class of graphs is even
                                %$k$-degenerate. 
This implies that, for any $d$, there
is a planar graph (in fact a   series-parallel graph), that is not
$(2,d)$-colourable. %The goal of this note is to show that is not the case choosability with defect 1 of graphs on surfaces. 

A similar notion to that of defective colouring is clustered
colouring.  A graph $G$ is $k$-colourable with \emph {clustering} $c$ if the
vertices of $G$ can be coloured with $k$ colours such that each
monochromatic component has at most $c$ vertices. Note that a colouring of a graph has defect at most $1$ if and only if it has
clustering at most $2$. This is not the case for higher
defect/clustering values. Thus, our results also give colourings of
graphs on surfaces with clustering at most $2$. There is a plethora of
recent work on the subject
of clustered and defective graph colourings/choosability. For an extensive coverage of the topic see the recent survey by
Wood \cite{wood-survey-2018} and Sections 3, 4 and 5 in the survey by
Woodall \cite{Woodall-survey-2001}.

%In this note we are interested in colourings with defect $1$ of
%graphs embeddable in a surface. The goal here is to reduce the number
%of necessary colours while staying as close as possible to proper
%colourings. 

There is a rich history of various colouring problems on
graphs on surfaces. For the background, we only focus on the previous work on
colourings and choosability with defect 1.  The goal of this note is to
show that for every surface, graphs embeddable in that
surface have choice number with defect 1 less than the chromatic
number of that surface. 

A surface $\Sigma$ is a compact connected 2-manifold without
boundary. Surfaces are classified into two classes. Each orientable
(non-orientable) surface is homeomorphic to a sphere with $g\geq 0$
handles ($h\geq 1$ crosscaps) attached. For a surface $\Sigma$, its
\emph{Euler genus}, \eg{\Sigma}, is $2g$ if $\Sigma$ is orientable and
$h$ if $\Sigma$ is non-orientable. Given a graph $G$, its
Euler genus \eg{G} is the minimum Euler genus of a surface $\Sigma$
where $G$ can be embedded in. %For example, planar graphs have Euler genus $0$, toroidal graphs have Euler genus 2.

In their 1997 paper, Cowen, Goddard, and Jesurum \cite{DBLP:conf/soda/CowenGJ97,
  DBLP:journals/jgt/CowenGJ97} studied defective colouring of
graphs on surfaces. In their conclusions, they suggest a study of
defective choosability, that is, a list version of the problem. This is what we study in
this note, namely rather than studying the colouring problem of graphs
on surfaces while allowing defect $1$, we study a list version of the problem which is its strengthening. For
planar graphs, the defective list version was first studied by  Eaton and Hull~\cite{Eaton} and
\v{S}krekovski~\cite{Skre}.

A \emph{list assignment} for a graph $G$ is a function $L$ that
assigns a set $L(v)$, also called a \emph{list} $L(v)$, of colours to
each vertex $v \in V (G)$.  A list $L(v)$ with $|L(v)| \geq k$ is a
$k$-\emph{list}. A list assignment $L$ is a $k$-\emph{list assignment} if each
vertex is assigned a $k$-list. For a list assignment $L$, $G$ is $L$-\emph{colourable} with defect $d$ if
$G$ can be coloured such that each vertex $v$ gets a colour from its
list $L(v)$ and such that the defect is at most
$d$. $G$ is \ch{k,d}\emph{choosable} if for every $k$-list
assignment $L$,  $G$ is $L$-colourable with defect $d$.  In an
$L$-colouring of a graph, a vertex $v$ is \emph{proper} if none of its
neighbours have the same colour as $v$. The
\emph{choice-$i$ number} of a graph $G$ is the minimum $k$ such that $G$ is \ch{k,i}choosable.
The \emph{chromatic-$i$  number} of a graph $G$ is the minimum $k$
such that $G$ is \ch{k,i}colourable. For proper colourings, $i=0$ is
omitted in this notation. 

The following is our main result.

\begin{thm}\thmlabel{full}
Every graph $G$ is $\bigg(\bceil{2 + \sqrt{3\eg{G} + 3}}, 1\bigg)^*$--choosable.
\end{thm}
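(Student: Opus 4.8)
The plan is to argue by a minimal counterexample together with Euler's formula for the host surface. Suppose the theorem fails and, among all counterexamples, choose a graph $G$ of Euler genus $\mu := \eg{G}$ together with a $k$-list assignment $L$, where $k=\lceil 2+\sqrt{3\mu+3}\rceil$, such that $G$ admits no $L$-colouring of defect $1$, minimizing first $|V(G)|$ and then $|E(G)|$; fix an embedding of $G$ in a surface of Euler genus $\mu$. The first, routine, reducible configuration is a vertex $v$ with $\deg{v}\le k-1$: by minimality $G-v$ has an $L$-colouring of defect $1$, and since $v$ has at most $k-1$ coloured neighbours some colour of $L(v)$ is missing from them, so $v$ can be coloured \emph{properly}, a contradiction. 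Hence $G$ has minimum degree at least $k$.

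To exploit defect $1$ I would push this threshold up toward $2k$. Call a vertex \emph{saturated} in a given defect-$1$ colouring if it already carries a neighbour of its own colour. If $\deg{v}\le 2k-2$, then in any defect-$1$ colouring of $G-v$ some colour $c\in L(v)$ occurs on at most one neighbour of $v$, since $2k-2$ neighbours cannot realise two occurrences of each of the $k$ colours. If such a $c$ occurs on no neighbour, or on a single \emph{unsaturated} neighbour $u$, then colouring $v$ with $c$ (pairing $v$ with $u$ in the latter case) extends the colouring with defect $1$, again a contradiction. The obstruction, and the technical heart of the argument, is the case in which every colour of $L(v)$ that occurs at most once on the neighbourhood of $v$ occurs on a \emph{saturated} neighbour. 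The hard part will be to defeat this obstruction by \emph{recolouring}: given such a $c$ with its unique $c$-neighbour $u$ paired to some $w$, I would recolour $w$ and propagate along an alternating chain of monochromatic edges, in Kempe-chain fashion, so as to unsaturate $u$ without creating a new defect, thereby freeing $c$ for $v$. Proving that this recolouring always succeeds for vertices of sufficiently small degree---equivalently, isolating the precise family of reducible configurations that it handles---is exactly the delicate step carried out for the plane by Cushing and Kierstead, and is where essentially all the work lies.

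With a suitable family of reducible configurations in hand, I would close the argument by discharging. Euler's formula for the surface gives $|E(G)|\le 3|V(G)|-6+3\mu$, equivalently $\sum_{v}(\deg{v}-6)\le 6\mu-12$. Assigning each vertex the initial charge $\deg{v}-6$ and redistributing charge from high-degree vertices to their low-degree neighbours---whose scarcity and restricted mutual adjacency are guaranteed by the reducible configurations---I would show that the reducible configurations force $G$ to be so dense (an effective minimum degree of order $\sqrt{2}\,k$ in the extremal, near-complete regime) that the inequality $\sum_{v}(\deg{v}-6)\le 6\mu-12$ becomes impossible whenever $3\mu\le k^2-4k+1$, which is exactly the inequality guaranteed by $k=\lceil 2+\sqrt{3\mu+3}\rceil$. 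This contradiction proves the theorem. The balance between the quadratic growth in $k$ coming from the forced density and the linear term $6\mu$ in Euler's formula is precisely what produces the $\sqrt{3\mu+3}$ in the statement, while the residual $\sqrt{2}$ gap from the lower bound furnished by $K_{2k+1}$---which has Euler genus $\lceil (2k-2)(2k-3)/6\rceil$ and is not $(k,1)$-colourable since each colour class is a matching of size at most $k$---reflects the limits of what the recolouring step can reduce.
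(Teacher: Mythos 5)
There is a genuine gap: the ``technical heart'' of your argument --- the recolouring/Kempe-chain step that would defeat the saturated-neighbour obstruction and push the minimum degree of a minimal counterexample from $k$ up toward $2k$ --- is not proved, only announced (``where essentially all the work lies''). Nothing in the rest of your outline substitutes for it, and there is strong evidence that no such lemma is available: if you could show that a vertex-minimal non-\ch{k,1}choosable graph has minimum degree at least, say, $2k-2$, then Euler's formula alone (no discharging, no reducible configurations) would give $(2k-8)n\leq\sum_v(\deg{v}-6)\leq 6\mu-12$ and hence a bound of roughly $2+\sqrt{1.5\mu}$ colours --- \emph{better} than the theorem you are trying to prove. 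The paper itself points out that no non-trivial maximum-average-degree results are known for defect-$1$ choosability (unlike defect $d\geq 2$), and the Cushing--Kierstead recolouring machinery you invoke is tailored to planar graphs and does not establish a minimum-degree statement of this strength on general surfaces. So the plan as stated either fails at its central step or, if it succeeded, would prove something stronger than anyone knows how to prove.

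The paper's actual route keeps the minimum-degree bound at $t$ and instead proves weaker, purely local structure around the degree-$t$ vertices of an edge-maximal minimal counterexample: the degree-$t$ vertices form an independent set; no triangle has degrees $(t,t+1,t+1)$; hence (by edge-maximality, which triangulates the neighbourhood) every degree-$t$ vertex has at least $\ceil{\frac{t}{2}}$ neighbours of degree at least $t+2$, with a further refinement when $t$ is even whose proof is the only delicate list-colouring argument (a greedy-plus-pairing colouring of the neighbourhood cycle). These facts, together with Lov\'asz-type balancing (which gives a vertex of degree at least $2t$, and $n\geq 2t+1$), feed a discharging argument in which every vertex ends with charge at least $t+1$ and one vertex with charge about $\frac{2t^2}{t+2}$, contradicting $\sum_v(\w{v}-6)\leq 6\mu-12$ exactly when $t\geq\ceil{2+\sqrt{3\mu+3}}$. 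Your closing numerology ($3\mu\leq k^2-4k+1$) matches the paper's, but the configurations that make the discharging work are different from, and much weaker than, the ones your sketch presupposes --- and that weakening is precisely what makes the proof possible.
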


%({\tt Vido this does not need non-planar, as 4.1 has been proved already)}
 Cowen, Goddard and Jesurum~\cite{DBLP:journals/jgt/CowenGJ97,DBLP:conf/soda/CowenGJ97} proved that toroidal graphs are \cl{5,1}colourable. \thmref{full} implies that they are, in fact, \ch{5,1}choosable thus the theorem constitutes a strengthening of that result. In addition, the proof of \thmref{full} does not use the four-colour theorem.
 For planar graphs, the theorem states that they are
\ch{4,1}choosable. This has been proved by  Cushing and
Kierstead \cite{DBLP:journals/ejc/CushingK10}. Our proof works for \ch{t,1}choosability where $t\geq 5$,
thus it does not imply Cushing and Kierstead's result.

It is well-known (by Heawood's conjecture  \cite{Genus1}  proved by Ringel and
Young \cite{Genus2}) that for every surface $\Sigma$, except the Klein
bottle, there is a graph (in fact a complete graph) that embeds in that surface whose
chromatic number is exactly $\floor{\frac{7+\sqrt{24\eg{\Sigma}+1}}{2}}=\floor{3.5+\sqrt{6\,\eg{\Sigma}+0.25}}
$. Since this lower bound is achieved for complete graphs $G$, it follows
that the choice number of such graphs is at least
$\floor{3.5+\sqrt{6\,\eg{G}+0.25}}$. Thus, \thmref{full} shows
that for every surface, the choice-$1$ number of graphs
embeddable in the surfaces can be reduced below what the chromatic number
of the surface allows. 

The bound in \thmref{full} is surely not tight. The only lower bound
available however, follows from the fact that having choice-1 number at most $p$ implies having a chromatic number
at most $2p$. Thus, the above $\floor{3.5+\sqrt{6\,\eg{\Sigma}+0.25}}$
bound on the chromatic number implies that for every surface $\Sigma$, except the Klein
bottle, there is a graph  that embeds in that surface whose
choice-1 number is at  least $\ceil{1.75 + \sqrt{1.5\,\eg{G} + 1/16}}$. For example, this lower bound and
\thmref{full} imply that the correct bound on choice-1 number of toroidal graphs is either 4 or 5, leading to an open
question of whether toroidal graphs are \ch{4,1}choosable.  
Unfortunately it is still not even known if toroidal graphs are
\ch{4,1}colourable, which is  an open
question from 1997 by Cowen, Goddard, and Jesurum \cite{DBLP:conf/soda/CowenGJ97,
  DBLP:journals/jgt/CowenGJ97}. 

Even the question on weather all planar graphs are \ch{4,1}choosable
has been open until recently.  The question, which received
considerably attention, was asked in several articles over
the years \cite{DBLP:journals/jgt/Woodall04,aop, Skre, Eaton} and was finally settled in the positive by Cushing and
Kierstead \cite{DBLP:journals/ejc/CushingK10}. A well-known result by Voigt states that there are planar graphs that are not 4-choosable
\cite{DBLP:journals/dm/Voigt93}. Cowen, Cowen and Woodall
\cite{DBLP:journals/jgt/CowenCW86} proved that there are planar graphs
that are not \ch{3,1}colourable and thus they are not
\ch{3,1}choosable.  Cowen, Goddard, and Jesurum
\cite{DBLP:journals/jgt/CowenGJ97} show that in fact testing if a planar
graph is \ch{2,1}colourable or \ch{3,1}colourable is
NP-complete. The hardness of \ch{2,1}colourability remains unchanged even
for planar graphs with maximum degree 4, as proved by  Corr{\^{e}}a, Havet and
Sereni~\cite{DBLP:journals/ajc/CorreaHS09}.  For more on the
complexity 
of defective colouring problems, see the recent results by Belmonte, Lampis and Mitsou~\cite{DBLP:conf/stacs/BelmonteLM18}.

Note that the above negative result on  \ch{3,1}colourability of
planar graphs implies that  the aforementioned lower bound,
$\ceil{1.75 + \sqrt{1.5\,\eg{G} + 1/16}}$, on chromatic-1 and choice-1
number is not tight, in fact it is off by $2$ for planar graphs. Thus, the lower bound is
possibly quite weak for choice-1 number for graphs on surfaces. We
conclude this section by asking for improvements on the lower and upper
bound on choice-1 number of such graphs, starting with the
question of whether toroidal graphs are \ch{4,1}choosable or 
\ch{4,1}colourable. Colouring and choosability problems often benefit
from proving a non-trivial maximum average degree results for a
colouring problem. Unfortunately, while such results are known for
defective-$d$ choosability with $d\geq2$, by the results of Havet and
Sereni \cite{DBLP:journals/jgt/HavetS06}, no such results are known
for defective-$1$ choosability. Any non-trivial bound on maximum average degree for defective-$1$
choosability, could likely be used to improve the result in \thmref{full}.

\section{Useful Lemmas}
We consider simple and undirected graphs $G=(V,E)$ with vertex set
$V(G)$ and edge set $E(G)$. The number of vertices is denoted by
$n=|V(G)|$ and the number of edges by $m=|E(G)|$. For a set $S\se
V(G)$, $G[S]$ denotes the subgraph of $G$ induced by the vertices of
$S$ and $G-S$ denotes the subgraph $G[V(G)\sm S]$. If $S$ is comprised
of one vertex, $v$, then $G-v$ denotes $G-\{v\}$. For every $v\in V$, let $N(v)$ denote the set of neighbours of $v$. The {\emph{degree} of $v$ is $\deg[G]{v}=|N(v)|$.  A vertex of degree $d$ is called \emph{degree--$d$} vertex.  
%For every $vw\in E$, let $N(vw)$ denote the set of neighbours of $v$ and $w$ (excluding $v$ and $w$).   
The minimum and maximum vertex degree in $G$ are denoted respectively
by $\delta(G)$, and $\Delta(G)$. We omit ``$G$'' from this notation whenever the graph is clear from the context.

%For every $v\in V$, let $L(v)$ denote the set of admissible colours of $v$. A list $L(v)$ with $|L(v)|=p$ is a \emph{$p$-list}. $L=\cup\{L(v)\ | \ v\in V \}$. In a vertex coloured graph $G$, a vertex is \emph{proper} if its colour is distinct from the colours of all of its neighbours in $G$. A colour class is \emph{proper} if it induces an independent set in $G$, that is if all its vertices are proper.

We start with a useful observation. Lov\'asz'~\cite{l66} proof for
defective colouring of graphs of maximum degree $\Delta$ extends
easily to choosability. We include the proof for completeness.

\begin{lemma}\textup{~\cite{l66}}
\lemlabel{lovaz}
For every integer $k>0$, every graph $G$ with maximum degree $\Delta$
is \ch{k, \floor{\frac{\Delta}{k}}} choosable. In fact, $G$ is $k$-choosable such that each vertex $v\in V(G)$ has at most $\floor{\frac{\deg{v}}{k}}$ neighbours with the same colour as $v$.
\end{lemma}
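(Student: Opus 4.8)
The plan is to argue by a standard local-recolouring (potential-function) argument, which is the essence of Lov\'asz' original method. Fix an arbitrary $k$-list assignment $L$. Since every list is non-empty, at least one $L$-colouring exists; call an $L$-colouring $\phi$ (so $\phi(v)\in L(v)$ for every $v$) \emph{optimal} if it minimises the number of \emph{monochromatic edges}, i.e.\ edges whose two endpoints receive the same colour. Because this quantity is an integer in $\{0,1,\dots,m\}$, its minimum over all $L$-colourings is attained, so an optimal $\phi$ exists. I would then show that in any optimal $\phi$ every vertex $v$ has at most $\floor{\deg{v}/k}$ neighbours of its own colour. This is exactly the stronger ``in fact'' statement, and the first statement follows at once, since $\deg{v}\le\Delta$ gives $\floor{\deg{v}/k}\le\floor{\Delta/k}$, so the defect is at most $\floor{\Delta/k}$.

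The heart of the argument is a pigeonhole (averaging) observation used to derive a contradiction. Suppose, for contradiction, that in an optimal $\phi$ some vertex $v$ has strictly more than $\floor{\deg{v}/k}$ neighbours coloured $\phi(v)$. I would look at the colours available to $v$ and count, for each colour $c\in L(v)$, the number $m_c$ of neighbours of $v$ coloured $c$; neighbours whose colour lies outside $L(v)$ are simply ignored. As each neighbour is counted for at most one colour, $\sum_{c\in L(v)} m_c \le \deg{v}$, and since $|L(v)|\ge k$, averaging yields a colour $c^*\in L(v)$ with $m_{c^*}\le \floor{\deg{v}/k}$. Recolouring $v$ with $c^*$ produces a new $L$-colouring $\phi'$.

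Finally I would verify that $\phi'$ has strictly fewer monochromatic edges than $\phi$, contradicting optimality. The only edges whose monochromatic status can differ between $\phi$ and $\phi'$ are those incident to $v$; every other edge is untouched. Under $\phi$, the number of monochromatic edges incident to $v$ equals the number of neighbours coloured $\phi(v)$, which by assumption exceeds $\floor{\deg{v}/k}$; under $\phi'$ it equals $m_{c^*}\le \floor{\deg{v}/k}$. Hence the total count of monochromatic edges strictly drops, the desired contradiction. I do not expect a genuine obstacle here, as the argument is entirely elementary; the only points needing care are keeping the averaging bound clean (only colours of $L(v)$ enter the count, and the minimum of the integers $m_c$ is at most their average $\deg{v}/|L(v)|\le \deg{v}/k$, hence at most $\floor{\deg{v}/k}$) and confirming that recolouring a single vertex leaves all non-incident edges unchanged, so that the potential genuinely decreases.
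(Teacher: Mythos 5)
Your proof is correct and follows essentially the same route as the paper's own argument: take an $L$-colouring minimising the number of monochromatic edges, and use the pigeonhole/averaging bound over the colours in $L(v)$ to show that recolouring any offending vertex would strictly decrease that count. Your write-up is in fact slightly more careful than the paper's (which contains minor notational slips such as writing $L_p$ for $V_p$), so no changes are needed.
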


\begin{proof}
%For a $k$-list assignment $L$ of $G$ create a (initially empty) vertex set $V_i$ for every colour $i\in L$. Place each vertex $v$ in one of the $k$ sets that corresponds to the colours available to $v$, $L(v)$. Do this placement such that in the resulting vertex partition of $V(G)$ the number of monochromatic edges is minimum possible. 
For a $k$-list assignment $L$ of $G$  consider an $L$--colouring of
$G$ that minimizes the number of monochromatic edges. Assume for the
sake of contradiction that this is not a desired list colouring. Let
$V_1, V_2, \dots V_{|L(G)|}$ denote the resulting colour classes (some
possibly empty), where $L(G) = \cup\{L(v) | v \in V(G) \}$. Assume that that there is a vertex $v\in V(G)$ coloured $c\in L(v)$ such that there are at least $\floor{\frac{\deg{v}}{k}}+1$ neighbours of $v$ in the colour class $V_c$. In that case, there is a colour class $V_p$, $p\in L(v)$ such that the number of neighbours of $v$ in $L_p$ is at most $\floor{\frac{\deg{v}}{k}}$. Changing the colour of $v$ from $c$ to $p$ reduces the number of monochromatic edges, thus the contradiction.
\end{proof}

The proof of \thmref{full} uses the discharging technique. The following lemma provides some key observations that will be used for discharging rules.

\begin{lemma}\lemlabel{prop}
Let $G$ a vertex minimal graph %and subject to that edge maximal graph graph 
of Euler genus at most $\mu$ %(with a $2$-cell embedding in the
                             %surface of Euler genus $\mu$) 
 such that $G$ is not \ch{t,1}choosable, $3\geq t\in \Ne$. Then $G$ has the following properties.

\begin{compactenum}
\item
\begin{compactenum}
\item\label{min} The minimum degree, $\delta(G)$, of $G$ is at least $t$
\item\label{max} The maximum degree, $\Delta(G)$, of $G$ is at least
  $2t$
\item\label{numv} The number of vertices in $G$ is at least $2t+1$.
\end{compactenum}
\item
\begin{compactenum}
\item \label{IS} The set of degree-$t$ vertices of $G$ forms an independent set in $G$.
\item \label{degt1} Each vertex $v$ of $G$ has at most
  \floor{\frac{\deg{v}}{2}} degree-$t$\ neighbours, whenever $G$ is
  edge maximal.
\end{compactenum}

\item 
\begin{compactenum}
\item \label{triangle} There is no 3-cycle  $v,w,u$ in
  $G$ such that $\deg{v}=t$, $\deg{w}=t+1$ and $
\deg{u}=t+1$.

\item  \label{degt} Each degree-$t$ vertex $v$ of $G$, has at least \ceil{\frac{t}{2}} degree-$d$,\ $d\geq t+2$, neighbours whenever $G$ is edge-maximal.
\end{compactenum}

\item  \label{big} 
Whenever $G$ is edge-maximal, if $t$ is even and a degree--$t$ vertex $v$ has exactly $\frac{t}{2}$ degree-$d$,\  $d\geq t+2$, neighbours, then at least one of them has degree at least $t+3$.

%Whenever $G$ is edge-maximal, if $t$ is even or equal to 3, and a degree--$t$ vertex $v$ has exactly \ceil{\frac{t}{2}} degree-$d$,\  $d\geq t+2$, neighbours, then at least one of them has degree at least $t+3$.%\footnote{Vido, this stmt is not necessarily true if $t$ is odd. Just think of $t=5$ and $N(v)$ induces 5--cycle with lists $1, 1, \{1,2\}, 2, \{1,2\}$. Note that it is true for $t=3$ though.}

%\item \label{adjtrian} Two triangles that share and edge (probably sharing just a vertex is enough as well) can not have all their vertices with degrees-$d$, $d\leq t+1$. ({\tt Never managed to use this one.}) 
\end{compactenum}

\end{lemma}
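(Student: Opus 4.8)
The plan is to argue by contradiction: assume $t$ is even and $v$ is a degree-$t$ vertex of the edge-maximal graph $G$ whose big neighbours (those of degree at least $t+2$) number exactly $\tfrac t2$ and all have degree exactly $t+2$. Since $G$ is edge-maximal it is a triangulation of its surface, so the neighbours of $v$ form a cycle $u_1,\dots,u_t$ (the link of $v$) in which consecutive vertices are adjacent and span a triangle together with $v$. By part~(\ref{IS}) no neighbour of $v$ has degree $t$, so the remaining $\tfrac t2$ neighbours have degree exactly $t+1$. Part~(\ref{triangle}) forbids two degree-$(t+1)$ neighbours from being consecutive on this link; as there are $\tfrac t2$ of them on a cycle of length $t$, they must alternate with the degree-$(t+2)$ neighbours around $v$. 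This alternation is the structural fact I would exploit.

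Next I would set up the colouring. Fix a $t$-list assignment $L$ witnessing that $G$ is not $(t,1)^*$-choosable. By vertex-minimality $G-v$ is $(t,1)^*$-choosable, so it has an $L$-colouring of defect at most $1$; among all such colourings choose one, $\phi$, that minimises the number of monochromatic edges. Because $\phi$ does not extend to $v$, a short case check shows that the $t$ neighbours of $v$ must receive $t$ distinct colours, together exhausting $L(v)$, and each must be \emph{full}, i.e.\ already have a (unique) equally-coloured neighbour $\neq v$. Minimality of the monochromatic-edge count then forces the neighbourhood of each $u_i$ to display all $t$ colours: for a degree-$(t+1)$ neighbour its other $t$ neighbours carry all colours exactly once, while for a degree-$(t+2)$ neighbour exactly one colour is repeated and all the others appear once.

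The engine of the contradiction is the exchange ``recolour a neighbour $u_i$ with the colour of one of its \emph{non-full} neighbours, thereby freeing $u_i$'s old colour for $v$''; whenever it applies it produces a defect-$1$ colouring of all of $G$, which is impossible. Hence this move must be blocked at every neighbour, which (via the rigidity above) pins down exactly which neighbours-of-neighbours are full. I would then combine this with the alternation to run a single simultaneous recolouring of $v$ together with one neighbour and the ``spare'' vertex in a repeated colour class, and colour $v$. The main obstacle is precisely this last step: a per-vertex colour count is useless here because every neighbour of $v$ has degree strictly larger than $|L(v)|=t$, so the argument must be global and must verify that the simultaneous recolouring introduces no new vertex of defect $2$. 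This is where the hypothesis ``degree exactly $t+2$'' is essential: it forces each big neighbour to have a single repeated colour and hence one redundant vertex available for the swap, whereas a neighbour of degree at least $t+3$ would admit an extra blocking colour that destroys the exchange---exactly the reason the statement must allow one big neighbour to have degree at least $t+3$.
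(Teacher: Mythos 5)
Your structural opening matches the paper exactly: edge-maximality makes the neighbourhood of $v$ a cycle $u_1,\dots,u_t$, and properties \ref{IS} and \ref{triangle}, together with the assumption that every big neighbour has degree exactly $t+2$, force the degree-$(t+1)$ and degree-$(t+2)$ neighbours to alternate. Your first colouring step is also sound: if $\phi$ is a defect-$1$ $L$-colouring of $G-v$ with fewest monochromatic edges and no colour of $L(v)$ is available for $v$, then (since $v$ has only $t=|L(v)|$ neighbours) the neighbours must carry $t$ distinct colours exhausting $L(v)$, each neighbour must be full, and minimality forces every colour of $L(u_i)$ to appear on $N(u_i)\setminus\{v\}$. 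But from there on you have a plan, not a proof, and the plan has concrete holes. First, your claim that a degree-$(t+2)$ neighbour must have ``a single repeated colour'' is not forced: its $t+1$ coloured neighbours may instead include one vertex whose colour lies outside $L(u_i)$, with every list colour appearing exactly once; in that case the ``spare vertex in a repeated colour class'' that your exchange relies on does not exist. Second, even when a repeated colour class exists, recolouring $u_i$ into it creates defect $2$ at $u_i$, so the move must instead recolour the spare vertex itself --- and that vertex can have arbitrarily large degree, so nothing guarantees it has any usable colour in its own list. You acknowledge this yourself (``the main obstacle is precisely this last step''), which is precisely an admission that the contradiction the proof must deliver has never been derived: the blocked-exchange analysis yields a rigid local picture, but no inconsistency within it.

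For comparison, the paper sidesteps exchange arguments entirely. It deletes the whole closed neighbourhood, applies vertex-minimality to $G'=G-\{v\cup N(v)\}$ (nonempty by property \ref{numv}), and recolours $H'=G[v\cup N(v)]$ from scratch using reduced lists $L'$ (each list minus the colours of outside neighbours). The hypothesis that every neighbour of $v$ has degree at most $t+2$ is used exactly where your approach stalls: it guarantees the reduced lists are large relative to degrees inside $H=G[N(v)]$ --- the degree-$(t+1)$ vertices (an independent set $S$, by property \ref{triangle}) keep lists of size at least their $H$-degree, and the degree-$(t+2)$ vertices keep lists of size at least their $H$-degree in $H-S$ plus one --- so a greedy proper colouring of $H-S$ followed by a careful assignment on $S$ (using the cyclic order to place the unavoidable monochromatic edges without collisions) extends to a defect-$1$ colouring of $G$, giving the contradiction. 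If you want to rescue your route, you must either carry out the global simultaneous recolouring in full, or switch to this delete-and-recolour strategy.
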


 \begin{proof}
By the assumptions of the lemma there is a  $t$-list assignment $L$ such
that $G$ is not $L$-colourable. 

\ref{min}. Assume on the contrary that $G$ has a vertex $v$ of degree at most
$t-1$.  Then $G-v$ is a nonempty graph of Euler genus at most $\mu$ and thus
by  the vertex minimality of $G$ it is  $L$-colourable with defect $1$. 
% Djidji: why and why important non
                                %empty and why does it have genus  
Since $|N(v)|\leq t-1$, there is a colour in the $t$-list $L(v)$ that is not used
by any vertex in $|N(v)|$, thus the $L$-colouring of $G-v$ can be
extended to $L$-clouring of $G$, giving the contradiction. 

\ref{max}. If $\Delta(G)<2t$, then $\floor{\frac{\Delta(G)}{t}}=1$ and thus
$G$ is \ch{t,1}choosable  by \lemref{lovaz}, thus contradicting the
assumptions of the lemma.

\ref{numv}. Follows from \ref{max}.

\ref{IS}.  Assume on the contrary that $G$ has two degree-$t$ vertices,
$v$ and $w$, that are adjacent. Then $G-\{v,w\}$ is a nonempty graph of Euler genus at most $\mu$ and thus
by the vertex minimality of $G$ it is  $L$-colourable  with defect $1$. Since
$|\{N(v)-w\}|\leq t-1$ and $|\{N(w)-v\}|\leq t-1$, there is a
colour $c_1$ in the $t$-list $L(v)$ and $c_2$  in the $t$-list $L(w)$
such that no vertex in $\{N(v)-w\}$ is coloured $c_1$ and no vertex in $\{N(w)-v\}$ is coloured $c_2$ in the $L$-colouring of $G-\{v,w\}$. Thus,
assigning colour $c_1$ to $v$ and $c_2$ to $w$ gives $L$-colouring of
$G$ with defect 1 (even if $c_1=c_2$).

\ref{degt1}. We now can assume $G$ is edge maximal graph with a 2-cell
embedding in a surface of Euler genus at most $\mu$. The cyclic
ordering of edges around $v$ in the embedding. By edge maximality of
$G$, for any pair of consecutive edges $vw$ and $vy$ around $v$, we have that $x$ and $y$ are adjacent. Thus, by~\ref{min} and \ref{IS},  $x$ or $y$ must have degree at least $t+1$, implying the
claim.

\ref{triangle}. $G-\{v,w,u\}$ is
a nonempty graph of Euler genus at most $\mu$ and thus by the vertex
 minimality, $G$ it is  $L$-colourable  with defect $1$.  Since $|\{N(w)-\{v,u\}\}|\leq t-1$ and $|\{N(u)-\{v,w\}\}|\leq t-1$, there is a
 colour $c_1$  in the $t$-list $L(w)$ and $c_2$  in the $t$-list
 $L(u)$ such that no vertex in $\{N(w)-\{v,u\}\}$ is coloured $c_1$ and no vertex in
$\{N(u)-\{v,w\}\}$ is coloured $c_2$ in the $L$-colouring of
$G-\{v,w,u\}$. Assign colour $c_1$ to $w$ and $c_2$ to $u$. That gives
$L$-colouring of $G-v$ with defect 1. If  there is a colour $c$ in the
$t$-list $L(v)$ such that no vertex in $N(v)$ is coloured $c$ in that
$L$-colouring of $G-v$, then assigning $v$ colour $c$ gives
$L$-colouring of $G$ with defect 1. Otherwise, $c_1\in L(v)$ and
$c_2\in L(v)$ and $c_1\not= c_2$.  
%Djidji, why? because otherwise only t-1 colours form L(v) are used.
In that case, both $w$ and $u$ are proper in  the $L$-colouring of
$G-v$ and thus assigning $v$ colour $c_1$ gives $L$-colouring of
$G$ with defect 1.

\ref{degt}.  We now can assume $G$ is edge maximal graph with a 2-cell
embedding in a surface of Euler genus at most $\mu$. The edge maximality of $G$ implies
that every pair of consecutive edges $vx$ and $vw$ around $v$, defines
a 3-cycle $v,x,y$ in $G$. Then \ref{triangle} and the fact that
$\deg{v}=t$ imply the claim.

\ref{big}. We now can assume $G$ is edge maximal graph with a 2-cell
embedding in a surface of Euler genus at most $\mu$. Assume for the
sake of contradiction that all the neighbours of $v$ have degree at most $t+2$. Let
$H=G[N(v)]$ and $H'=G[v\cup N(v)]$. Consider vertices in $N(v)$ in the
cyclic order, $v_1, v_2\dots v_t$, around $v$ as determined by the
embedding and starting with a degree--$(t+2)$ vertex $v_1$.  By the
edge maximality of $G$, $C=v_1, \dots, v_t, v_1$ is a cycle (non
necessarily induced) in $G$. Then the degrees in $G$ of vertices in
$N(v)$ are as follows $\deg{v_i}=t+2$ for $i\equiv (1 \mod 2)$ and
$\deg{v_i}=t+1$ for $i\equiv (0 \mod 2)$. (The fact that the degrees
alternate, between $t+1$ and $t+2$, along $C$ is the consequence of Property \ref{IS},
Property \ref{triangle} and the assumption that $v$ has exactlly $\frac{t}{2}$
degree-$(t+2)$ neighbours). Thus, since $t$ is even, for each
degree-$(t+2)$ vertex in $C$ its two neighbours along $C$ are
degree-$(t+1)$ vertices. $G'=G-\{v\cup N(v)\}$ has genus at most
$\mu$ and it is not an empty graph by Property \ref{numv} and the fact
that $|v\cup N(v)|=t+1$, and $t>0$. Thus, $G'$ is $L$-clourable with
defect $1$. This list colouring of $G'$ can be extended to a
$L$-colouring of $G$ with defect $1$ as follows.

Define the list assignment $L'$ of $H'$ (and $H$) as follows. For
every $w\in  V(H')$ the list $L'(w)$ is equal to the list $L(w)$ minus the colours
used by the neighbours of $w$ in $L$-colouring of $G'$. Clearly any $L'$-list colouring
of $H'$ with defect $1$, extends the colouring of
$G'$ to $L$-colouring of $G$ with defect 1. By considering degrees in $G$
of vertices in $H'$ we get, $|L'(v)|=|L(v)|$, $|L'(v_i)|\geq
\deg[H']{v_i}-2$ for $i\equiv (1\mod 2)$ and  $|L'(v_i)|\geq
\deg[H']{v_i}-1$ for $i\equiv (0\mod 2)$. Moreover, in $H$,
$|L'(v_i)|\geq \deg[H]{v_i}-1$ for $i\equiv (1\mod 2)$ and
$|L'(v_i)|\geq \deg[H]{v_i}$ for $i\equiv (0\mod 2)$. Let $S=\{\cup
v_i\, |\, i\equiv (0\mod 2)\}$. By the above observation, each vertex $v_i$, $i\equiv (1\mod
2)$, is adjacent to at least 2 vertices of $S$ (its neighbours along $C$), thus
$\deg[H]{v_i}\geq \deg[H-S]{v_i}+2$.  
Thus, in $H- S$, $|L'(v_i)|\geq \deg[H-S]{v_i}+1$ for all $i\equiv (1\mod
2)$. Therefore, each vertex $x$ in $H-S$ has smaller degree in $H-S$
than the number of colours in its list as determined by $L'$ list
assignment, that is, $\deg[H-S]{x}<|L'(x)|$. The greedy colouring then 
implies that  $H- S$ has $L'$-colouring where every vertex in $H-S$ is
proper. %In fact, every vertex of $H- S$ is proper in the resulting $L$ colouring of $G-\{S\cup v\}$. 

%Clearly $H- S$ can be list coloured such that all the vertices are proper. 
By property \ref{triangle}, $S$ forms an independent set in
$H$. Thus, all of $\deg[H]{v_i}$ neighbours of vertex $v_i\in S$ in $H$ are in
$V(H)- S$. Let $A$ denote the vertex set comprised of the vertices
$v_i\in S$ whose neighbours in $H- S$ use at most $|L'(v_i)|-1\geq\deg[H]{v_i}-1$
colours. Let $B=S- A$. Since  $|L'(v_i)|\geq \deg[H]{v_i}$ for each vertex $v_i\in S$, each vertex in $v_i\in A$ can choose a colour from its list $L'(v_i)$ such that $v_i$ is proper in $L'$-colouring of
$H- B$ (and in $H$ as will be seen later). Then $H- B$ is
$L'$-coloured such that every vertex of $H- B$ is proper. For each vertex $v_i\in B$, each of its colours in $L'(v_i)$
is used by exactly one of its neighbours in $H$, and each of its
neighbours in $H$ uses actually one colour in $L'(v_i)$. Thus, giving
each such vertex $v_i$ the colour equal to the colour of its first
counterclockwise neighbour along $C$ defines $L'$-colouring of
$H$. It remains to show that this $L'$-colouring of $H$ has
defect  at most 1. Clearly, the vertices of $A$ are proper in the
colouring of $H$. Vertices of $H- S$ are proper in the colouring of
$H- B$, thus  all the monochromatic edges have one endpoint in $B$ and
the other in $V(H)- S$. Assume a vertex $w\in B$ has two neighbours in
$V(H)- S$ coloured with the same colour as $w$. That implies that the
number of colours used by the neighbours of $w\in V(H)- S$ is at most
$\deg[H]{w}-1$ thus $w\in A$, contradiction. Finally, assume a vertex
$w\in V(H)- S$ has two neighbours $x$ and $y$ in $B$ coloured with the
same colour as $w$. That implies that $w$ is the first counter clockwise
neighbour on $C$ of both $x$ and $y$, which is impossible since no
pair of vertices of $S$ are adjacent.  Thus, we have an $L'$-colouring
of $H$ with defect 1. Recall that $|L'(v)|=t$. In $H'$, if the
vertices in $H$ do not use all the colours in $L'(v)$, then $v$ can be
coloured such that it is proper in $L'$-colouring of $H'$, thus resulting in the $L'$--colouring of $H'$ with defect at most one. Otherwise, each colour in $L'(v)$ is used by exactly one vertex in $H$ and each vertex in  $H$ uses exactly one colour in $L'(v)$. Therefore, the $L'$--colouring of $H$ is proper and no two neighbours of $v$ use the same colour. Thus, $v$ can use any colour in  $L'(v)$ and the resulting $L'$-colouring of $H'$ has defect at most one.

By the choice of $L'$ the resulting $L'$-colouring of $H'$ with defect
1 extends the $L$-colouring of $G'$ to give $L$-colouring of $G$ with defect 1, which is a desired contradiction.
\end{proof}

%%%%%%%%%%%% PROOF OF TH1 %%%%%%%%%%%%%%%%%%%%%

\section{Proof of \thmref{full}}

%{\em Proof of \thmref{full}}.

Let $\mu:=\eg{G}$ and $t:=\bceil{2 + \sqrt{3\mu + 3}}$. If $G$ is
planar, the statement of the theorem is true by the result of Cushing and
Kierstead}\cite{DBLP:journals/ejc/CushingK10}. Thus, we may assume that
$G$ is not planar, that is $\mu>0$. Thus, $t\geq 5$. Assume for the sake
of contradiction that the statement of the theorem is false. We may
assume that $G$ is a vertex minimal, and subject to that edge maximal,
connected graph that is a counter example to the theorem. Specifically,
assume $G$ is a connected graph
with a 2-cell embedding on a surface of Euler genus at most $\mu$, and no edge can be added
to $G$ without introducing edge crossings or making $G$ non-simple, 
%or faces of size $2$, 
and $G$ is not \ch{t,1}choosable but for every $v\in V(G)$, $G- v$ is
\ch{t,1}choosable. Thus, $G$ satisfies the properties listed in \lemref{prop}.

%We may assume that $G$ is edge maximal graph embedded on the surface, that is, adding a new edge to the embedding results in an edge crossing. %(Thus all the faces in the embedding have sizes 3 or 4.)
 
Let the number of faces of the embedding of $G$ be denoted by $f$. The
Euler formula gives: $f=m-n+2-\mu$. Each face in an embedding of the
edge maximal graph has size at most $3$, 
%3 or 4, 
and each edge is in 
% exactly 
at most 2 faces, thus $ f\leq \frac{2m}{3} $.

For each $v\in V(G)$ let its \emph{charge} $\w{v}:=\deg{v}$. 
Since $2m=\sum_{v_i\in\ V(G)}\deg{v_i}$, the Euler formula and the above inequality give 

\begin{equation}\label{eq2}
\sum_{v_i\in\ V(G)}(\w{v_i}-6)\leq 6\mu -12.
\end{equation}

We will move these charges from one vertex to another such that the overall sum $\sum_{v_i\in\ V(G)} (\w{v_i}-6)$ remains unchanged. We move the vertex charges according to the following discharging rules:\\

($\star1$) Each degree--$(t+2)$ vertex $v$ sends the charge of $\frac{1}{\floor{\frac{t}{2}}+1}$ to each degree--$t$ vertex in $N(v)$.

($\star2$) If $t$ is odd, each degree--$d$ vertex $v$, $d\geq t+3$, sends the charge of $\frac{1}{\floor{\frac{t}{2}}+1}$ to each degree--$t$ vertex in $N(v)$.

($\star3$) If $t$ is even, each degree--$d$ vertex $v$, $d\geq t+3$, sends the charge of $\frac{2}{\frac{t}{2}+1}$ to each degree--$t$ vertex in $N(v)$.\\

%To derive the contradiction, it suffices to show that the new charge of each vertex is at least BLA and that there is at least one vertex that has the charge strictly greater than BLA.

After above discharging rules are applied to all the vertices in $G$, their new charges are as follows. The weights of degree-$(t+1)$ vertices remain unchanged. 

Consider now the new weight of a degree-$t$ vertex $v$. By property \ref{degt} in \lemref{prop}, $v$ has at
least $\ceil{\frac{t}{2}}$ degree-$d$, $d\geq t+2$ neighbours. Thus,
if  $t$ is odd, each degree-$t$ vertex $v$ has new weight $\w{v}\geq t+
\ceil{\frac{t}{2}}\frac{1}{\floor{\frac{t}{2}}+1}=t+\frac{t+1}{2}\frac{1}{\frac{t-1}{2}+1}=t+1$. If
$t$ is even, we have two cases to consider. First consider the case
that $v$ has at last $\frac{t}{2} +1$ degree-$d$, $d\geq t+2$
neighbours. Then the new weight of $v$ is $\w{v}\geq t+
(\frac{t}{2}+1)\frac{1}{\frac{t}{2}+1}=t+1$. The second case to
consider is that $v$ has exactly $\frac{t}{2}$ degree-$d$, $d\geq t+2$
neighbours, in which case by property \ref{big} in \lemref{prop}, one
of the neighbours if $v$ has degree at least $t+3$. Then by the rule
($\star3$), the new weight of $v$ is $\w{v}\geq t+
(\frac{t}{2}-1)\frac{1}{\frac{t}{2}+1} + \frac{2}{\frac{t}{2}+1}=t+1$.

Since the minimum degree in $G$ is $t$, by Property \ref{min} of \lemref{prop}, it remains to consider the weights of the degree-$d$, $d\geq t+2$ vertices. By property \ref{degt1} in \lemref{prop}, each degree-$(t+2)$ vertex has at most $\floor{\frac{t+2}{2}}$ degree-$t$ neighbours. The rule ($\star1$) applies to each such vertex $v$ and thus its new weight is $\w{v}\geq t+2- \floor{\frac{t+2}{2}}\frac{1}{\floor{\frac{t}{2}}+1}=t+1$ (both when $t$ is odd and even).

Consider now a degree-$d$ vertex $v$, $d\geq t+3$. If $t$ is odd, then the rule ($\star2$) applies and by property \ref{degt1} in \lemref{prop}, the new weight of $v$ is  $\w{v}\geq d - \floor{\frac{d}{2}}\frac{1}{\floor{\frac{t}{2}}+1}> t+1$. (That is because, $d - \floor{\frac{d}{2}}\frac{1}{\floor{\frac{t}{2}}+1}\geq d-\frac{d}{2}\frac{1}{\frac{t-1}{2}+1}=  d-\frac{d}{t+1}$. Now $d-\frac{d}{t+1}> t+1$ whenever $d> t+2+\frac{1}{t}$. Since $t> 1$, that is our case.)

If $t$ is even, then the rule ($\star3$) applies and by property
\ref{degt1} in \lemref{prop}, the new weight of $v$ is  $\w{v}\geq d -
\floor{\frac{d}{2}}\frac{2}{\frac{t}{2}+1}$. If $d=t+3$, then $d$ is
odd and thus $\w{v}\geq t+3 -
\frac{t+2}{2}\frac{2}{\frac{t}{2}+1}=t+1$. Else $d\geq t+4$, and
$\w{v}\geq d-\floor{\frac{d}{2}}\frac{2}{\frac{t}{2}+1}> t+1$. (That
is because, $d-\floor{\frac{d}{2}}\frac{2}{\frac{t}{2}+1}\geq
d-\frac{2d}{t+2}$. Now $d-\frac{2d}{t+2}> t+1$ whenever $d>
t+3+\frac{2}{t}$. Since $t> 2$, that is our case.)

Therefore, after discharging every vertex has charge at least $t+1$.  Finally, we show that there is at least one vertex with the charge greater than $t+1$. By the above arguments each degree $d$, $d\geq t+4$, vertex $v$ has weight $\w{v}\geq d-\frac{d}{t+1}$ when $t$ is odd, and weight $\w{v}\geq d-\frac{2d}{t+2}$ when $t$ is even. Thus, in either case,  $\w{v}\geq \frac{dt}{t+2}$. By \lemref{lovaz}, $G$ has a vertex $v$ of degree at least $2t$. Since $t\geq 4$, $2t\geq t+4$. Therefore, $G$ has a vertex $v$ with weight $\w{v}\geq \frac{dt}{t+2}\geq \frac{2t^2}{t+2}$.

The new weights and the inequality \ref{eq2} give

% \begin{center}
% $\begin{array}{lll}
% \sum_{i=1}^{n}(\w{v_i}-6) & \geq & (n-1)(t+1-6) + \frac{2t^2}{t+2}-6\\
% \ & = & (n-1)(t-5) + t-5 + \frac{t^2-3t-2}{t+2}  \ \\
% \ & = & n(t-5) + t - \frac{5t+2}{t+2}  \ \\
% \ & = & n(t-5) + t - 5 + \frac{8}{t+2}  \ \\
% \ & = & (n+1)(t-5) + \frac{8}{t+2}  \ \\
% \ & > & (n+1)(t-5).  \ \\

% \end{array} $
% \end{center}

\begin{center}
$\begin{array}{lll}
\sum_{i=1}^{n}(\w{v_i}-6) & \geq & (n-1)(t+1-6) + \frac{2t^2}{t+2}-6\\
\ & = & (n+1)(t-5) + \frac{8}{t+2}  \ \\
\ & > & (n+1)(t-5).  \ \\

\end{array} $
\end{center}

By Property \ref{numv} of \lemref{prop}, $n\geq 2t+1$, and with $t\geq
5$ and inequality \ref{eq2} we get
\begin{center}
$\begin{array}{lll}
(2t+2)(t-5) < & \sum_{i=1}^{n}(\w{v_i}-6) & \leq  6\mu -12\\
\end{array} $
\end{center}

The inequality $(2t+2)(t-5) <  6\mu -12$ is only true for $t<2 +\sqrt{3\mu + 3}$ thus for $t= \bceil{2 + \sqrt{3\mu + 3}}$ we get a
  contradiction thereby completing the proof.

%By definition, $t= \bceil{2 + \sqrt{3\mu + 3}}$, in which case we get $ \sum_{i=1}^{n}(\w{v_i}-6)> (2t+2)(t-5)\geq 6\mu -12$ which contradicts inequality \ref{eq2}. That completes the proof.

\EndProof

%%%%%%%%%%%%%%%%%%%%%%%%%%%%%% %%%%%%%%%%%%%%%%%%%%%%%%%%%%%%%%%%%%%%%%%%%%%%%%%%%%%%%%%%%%%%%%%%%%%%%%%%%

{\small
\bibliographystyle{plain}
\bibliography{summary}

\begin{thebibliography}{10}

\bibitem{aop}
Dan Archdeacon.
\newblock Problems in topological graph theory.
\newblock \url{http://www.emba.uvm.edu/~archdeac/problems/deflist.html}.

\bibitem{DBLP:conf/stacs/BelmonteLM18}
R{\'{e}}my Belmonte, Michael Lampis, and Valia Mitsou.
\newblock Parameterized (approximate) defective coloring.
\newblock In {\em 35th Symposium on Theoretical Aspects of Computer Science,
  {STACS} 2018}, pages 10:1--10:15, 2018.

\bibitem{DBLP:journals/ajc/CorreaHS09}
Ricardo~C. Corr{\^{e}}a, Fr{\'{e}}d{\'{e}}ric Havet, and Jean{-}S{\'{e}}bastien
  Sereni.
\newblock About a brooks-type theorem for improper colouring.
\newblock {\em Australasian J. Combinatorics}, 43:219--230, 2009.

\bibitem{DBLP:conf/soda/CowenGJ97}
Lenore Cowen, Wayne Goddard, and C.~Esther Jesurum.
\newblock Coloring with defect.
\newblock In {\em Proceedings of the Eighth Annual ACM-SIAM Symposium on
  Discrete Algorithms}, SODA'97, pages 548--557. SIAM, 1997.

\bibitem{DBLP:journals/jgt/CowenGJ97}
Lenore Cowen, Wayne Goddard, and C.~Esther Jesurum.
\newblock Defective coloring revisited.
\newblock {\em Journal of Graph Theory}, 24(3):205--219, 1997.

\bibitem{DBLP:journals/jgt/CowenCW86}
Lenore~J. Cowen, Robert Cowen, and Douglas~R. Woodall.
\newblock Defective colorings of graphs in surfaces: Partitions into subgraphs
  of bounded valency.
\newblock {\em Journal of Graph Theory}, 10(2):187--195, 1986.

\bibitem{DBLP:journals/ejc/CushingK10}
William Cushing and Hal~A. Kierstead.
\newblock Planar graphs are 1-relaxed, 4-choosable.
\newblock {\em Eur. J. Comb.}, 31(5):1385--1397, 2010.

\bibitem{Eaton}
Nancy Eaton and Thomas Hull.
\newblock Defective list colorings of planar graphs.
\newblock {\em Bulletin of the Institute of Combinatorics and its
  Applications}, 27:79--87, 1999.

\bibitem{DBLP:journals/jgt/HavetS06}
Fr{\'{e}}d{\'{e}}ric Havet and Jean{-}S{\'{e}}bastien Sereni.
\newblock Improper choosability of graphs and maximum average degree.
\newblock {\em Journal of Graph Theory}, 52(3):181--199, 2006.

\bibitem{Genus1}
Percy~J. Heawood.
\newblock Map-colour theorem.
\newblock {\em Quart. J. Pure Appl. Math}, 24:332--338, 1890.

\bibitem{l66}
L\'aszl\'o Lov\'asz.
\newblock On decomposition of graphs.
\newblock {\em Studia Math.\ Hung.}, 1:237--238, 1966.

\bibitem{Genus2}
Gerhard Ringel and J.~W.~T. Youngs.
\newblock Solution of the heawood map-coloring problem.
\newblock In {\em Proc. Nat. Acad. Sci. USA.}, volume~60, pages 438--445, 1968.

\bibitem{DBLP:journals/dm/Voigt93}
Margit Voigt.
\newblock List colourings of planar graphs.
\newblock {\em Discrete Mathematics}, 120(1-3):215--219, 1993.

\bibitem{Skre}
Riste \v{S}krekovski.
\newblock List improper colourings of planar graphs.
\newblock {\em Comb. Probab. Comput.}, 8(3):293--299, 1999.

\bibitem{wood-survey-2018}
David~R. Wood.
\newblock Defective and clustered graph colouring.
\newblock {\em Electronic J. Combinatorics}, page DS23, 2018.

\bibitem{Woodall-survey-2001}
Douglas~R. Woodall.
\newblock List colourings of graphs.
\newblock In {\em Surveys in Combinatorics}, volume 288 of {\em London Math.
  Soc. Lecture Note Series}, pages 269–--301. Cambridge University Press,
  2001.

\bibitem{DBLP:journals/jgt/Woodall04}
Douglas~R. Woodall.
\newblock Defective choosability of graphs with no edge-plus-independent-set
  minor.
\newblock {\em Journal of Graph Theory}, 45(1):51--56, 2004.

\end{thebibliography}
}

\end{document}